\renewcommand{\subsection}{\@startsection
{subsection}{2}{0mm}{\baselineskip}{-0.25cm}
{\normalfont\normalsize\em}}
\newtheorem{theorem}{Theorem}
\newtheorem{proposition}[theorem]{Proposition}
\theoremstyle{definition}
\newtheorem{example}{Example}
\theoremstyle{remark}
\title{On Mignotte secret sharing schemes over Gaussian Integers}
\author{Diego Munuera-Merayo}
\date{\today}
\begin{document}
\maketitle

\begin{abstract}
Secret Sharing Schemes (SSS) are methods for distributing a secret among a set of participants.
One of the first Secret Sharing Schemes was proposed by M. Mignotte, based on the Chinese remainder theorem over the ring of integers. In this article we extend the Mignotte's scheme to the ring of Gaussian Integers and study some of its properties.
While doing this we aim to solve a gap in a previous construction of such extension. In addition we show that any access structure can be made through a SSS over $ \mathbb{Z}[i]$.

{\sc Keywords.} Secret sharing, Mignotte sharing scheme, Gaussian integer, Chinese remainder theorem.
\newline
\indent {\sc 2021 MSC.}   94A62, 13F07.
\end{abstract}

\section{Introduction}

Secret sharing schemes (SSS's) are methods for distributing a secret among a set of participants. Among their applications we may highlight key management, threshold and visual cryptography, e-voting, multiparty computation and more. Given the increasing importance of information security in today's society, it is not surprising that SSS's have become an active field of research within public key cryptography.

In a SSS, a secret $s$ is broken
into $n$ pieces or  {\em shares}. Each share is given to one out of  $n$ participants. This shares are designed ensuring that some authorized coalitions of participants can reconstruct the secret by pooling the shares of its members, while non-authorized coalitions cannot do it.  This family of authorized coalitions is called {\em access structure} of the scheme.

The problem of secret sharing was firstly introduced by A. Shamir in \cite{Shamir}. In that article he also suggested a method to perform it: the so-called Shamir threshold secret sharing schemes, based on Lagrangian interpolation. Since the publication of Shamir's work, several other methods have been proposed. For our purposes within this article, we may highlight the ones by M. Mignotte \cite{Mignotte} and C. Asmuth and J. Bloom \cite{AB}, both based on the Chinese remainder theorem for coprime modules over the ring of integers $\mathbb{Z}$.
This three methods belong to the category of  {\em threshold} $(t, n)$ schemes. This means that their access structures are formed by all the coalitions with at least $t$ out of $n$ participants, for a certain $t$.

On the other hand,
the Chinese remainder theorem has many known applications in cryptography, \cite{DAC}. It is based on a well known property of integers, the so-called {\em Euclidean division:} for any two integers $n,m$ with $m\neq 0$, there exists integers $q$ (quotient) and $r$ (remainder) such that $m=nq+r$ with $0\le r<|m|$.

Of course, Euclidean division is not exclusive to $\mathbb{Z}$. As there exist other rings that admit it, the called Euclidean Domains. Among these we may highlight the ring of univariate polynomials over a field  $\mathbb{K}[X]$, and the ring of Gaussian Integers $\mathbb{Z}[i]$. As a consequence a Chinese remainder theorem holds over both $\mathbb{K}[X]$ and $\mathbb{Z}[i]$, and therefore we can extend the Mignotte secret sharing scheme to these rings.

In the literature we can find successive generalizations of Mignotte's scheme. Firstly by S. Iftene to not necessarily coprime integers \cite{Iftene}. Later  T. Galibus and G. Matveev \cite{GMTodas} provided a version over polynomial rings of one variable. In that paper the authors also showed the remarkable property that any access structure can be realized  by a  Mignotte polynomial SSS.  Finally, in  \cite{OTS}   an extension of  Mignotte threshold scheme to the ring of Gaussian Integers is proposed, and subsequently applied to the problem of image sharing.Recall that Gaussian Integers play a role in digital signal processing, cryptography, coding, and many other fields of science and technology, see \cite{Blahut}. And that Mignotte's threshold scheme has already been proposed for image sharing, \cite{ImageSharing}.

However, the extension of Mignotte SSS to Gaussian Integers proposed in \cite{OTS} contains a  crucial gap. Contrary to what occurs over $\mathbb{Z}$ and $\mathbb{K}[X]$,
in the Euclidean division over  $\mathbb{Z}[i]$, quotient and remainder are not unique. Thus, given $v,z\in \mathbb{Z}[i]$, the expression $v \, (\mbox{mod } z)$ does not uniquely identify one element of $\mathbb{Z}[i]$. This causes that the solution of a system of simultaneous congruence equations is not unique over  $\mathbb{Z}[i]$ and, finally, that Mignotte's scheme based on it  may provide wrong recovered secrets.

In this article we propose a new version of Mignotte's SSS over $\mathbb{Z}[i]$ that works properly and investigate some of its properties. The paper is organized as follows:  in Section 2 we recall some basic facts on SSS's and the arithmetic of $\mathbb{Z}[i]$. The proposed scheme is developed in Section 3, where  we also study its main properties. In particular, following \cite{GMTodas} we show that any access structure can be realized by our scheme.  Also we characterize those access structures which are obtained from coprime modules.  Some examples are included in order to illustrate the obtained results.

\section{Some background on SSS's and Gaussian Integers}

First, we recall some known facts about secret sharing schemes and Gaussian Integers. A complete study of most of this subjects can be found in  \cite{Stinson} for secret sharing  and \cite{Fra} for Gaussian Integers.


\subsection{Secret sharing schemes}

Let ${\mathcal{P}} = \{1, \dots, n\} $ be a set of $n$  {\em participants} and ${\mathcal{S}}$ be a finite and non-empty  {\em set of secrets}. We want to share a secret $s \in {\mathcal{S}} $ among the participants in $\mathcal{P}$. For that purpose, each participant will receive a data $s_i$ about the secret, which we will call its {\em share}. A dealer computes the $s_1,\dots,s_n$ from $s$ and assigns $s_i$ to  each participant $i$.
A  {\em secret sharing scheme} is a method $ {\mathcal{R}}$ of calculating the shares $s_i $ so that certain groups of participants, previously determined, can recover $s$ by pooling the shares of their members; while making it impossible for any other coalition of participants to recover the secret. 

We will call {\em access structure} of the scheme (denoted by $\mathcal{A}$) to the family of all coalitions authorized to recover the secret. Note that this family is monotone increasing. 


We say that a scheme is {\em perfect} if unauthorized coalitions cannot deduce any information about the secret.
Another important feature to take into account in a scheme ${\mathcal {R}} $  is the size of the shares distributed to the participants.
Let us denote by ${\mathcal{S}}_ i$ the set of all possible values that $s_i$ can take when $s$
runs over ${\mathcal {S}}$. It can be proved that when no unauthorized coalition can discard any element of $\mathcal{S}$ to be the shared secret,  it holds that  $|{\mathcal{S}}_i|\ge |{\mathcal{S}}|$.
We define the {\em information rate} of $\mathcal{R}$ as
$$
\rho(\mathcal{R})=\min \left\{  \frac{\log|{\mathcal{S}}_i|}{\log|{\mathcal{S}}|} \ : \ i=1,\dots,n  \right\}.
$$
If $\rho(\mathcal{R})=1$ then the scheme $\mathcal{R}$ is called {\em ideal}.

\subsection{The Chinese remainder theorem}

In its classical version, over the ring of integers, the Chinese remainder theorem states the following.

\begin{theorem}
\label{chino}
Let $a_1,\dots,a_n \in \mathbb{Z}$,  and let $m_1,\dots,m_n$ be relatively pairwise coprime integers such that $m_1,\dots,m_n\ge 2$. The system of simultaneous congruence equations
$$
x\equiv a_i \; (\mbox{\rm mod } m_i), \hspace*{4mm} i=1,\dots,n
$$
has a unique solution modulo $\mbox{\rm lcm}(m_1,\dots,m_n)$. Furthermore, this solution can be explicitly given as
$
x=a_1q_1r_1+\cdots+a_nq_nr_n
$. Where $q_i=m_1\cdots m_n/m_i$ and  $r_i=q_i^{-1}$ in  $\mathbb{Z}/m_i\mathbb{Z}$, for $i=1,\dots,n$.
\end{theorem}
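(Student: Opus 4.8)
The plan is to establish existence by directly verifying that the displayed formula works, and then to establish uniqueness by a short divisibility argument. A preliminary observation simplifies matters: since $m_1,\dots,m_n$ are pairwise coprime, $\mathrm{lcm}(m_1,\dots,m_n)=m_1\cdots m_n$, so writing $M=m_1\cdots m_n$ it suffices to produce one solution and to prove it is unique modulo $M$.

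First I would check that the $r_i$ are well defined, i.e.\ that each $q_i=M/m_i=\prod_{j\neq i}m_j$ is a unit in $\mathbb{Z}/m_i\mathbb{Z}$. This holds because $m_i$ is coprime to every $m_j$ with $j\neq i$, hence coprime to their product $q_i$; the passage from ``coprime to each factor'' to ``coprime to the product'' is Euclid's lemma applied inductively, and an integer coprime to $m_i$ is invertible modulo $m_i$ by B\'ezout's identity. With the $r_i$ in hand, set $x=a_1q_1r_1+\cdots+a_nq_nr_n$ and reduce modulo a fixed modulus $m_k$: for each $i\neq k$ one has $m_k\mid q_i$, so the term $a_iq_ir_i$ is $\equiv 0\pmod{m_k}$, while the remaining term satisfies $a_kq_kr_k\equiv a_k\pmod{m_k}$ since $r_k\equiv q_k^{-1}\pmod{m_k}$. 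Hence $x\equiv a_k\pmod{m_k}$ for every $k$, which proves existence and simultaneously confirms the explicit formula.

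For uniqueness, suppose $x$ and $x'$ both solve the system; then $m_i\mid x-x'$ for every $i$. The key step is the elementary fact that an integer divisible by each of several pairwise coprime integers is divisible by their product, proved by induction on $n$ from the base case ``$\gcd(a,b)=1$, $a\mid c$, $b\mid c$ $\Rightarrow$ $ab\mid c$''. Applying this to $m_1,\dots,m_n$ yields $M\mid x-x'$, i.e.\ $x\equiv x'\pmod{M}$, as desired.

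I do not expect a genuine obstacle, as this is the classical Chinese remainder theorem; the only points needing care are the two appeals to B\'ezout's identity (coprimality of $m_i$ with the product $q_i$, and divisibility of $x-x'$ by the full product $M$), both of which are standard and will be the parts to state cleanly, since it is precisely the failure of their analogues that motivates the rest of the paper over $\mathbb{Z}[i]$.
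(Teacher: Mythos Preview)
Your proof is correct and entirely standard. The paper does not actually supply a proof of this theorem; it is quoted as classical background (the Chinese remainder theorem over $\mathbb{Z}$), so there is no argument in the paper to compare against.

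One minor correction to your closing comment: the analogues of B\'ezout's identity and the coprimality/divisibility facts you invoke do \emph{not} fail over $\mathbb{Z}[i]$---it is a Euclidean domain, hence a PID, and those arguments carry over verbatim. The difficulty the paper addresses is different: in $\mathbb{Z}[i]$ the quotient and remainder in Euclidean division are not uniquely determined, so the expression $v\ (\mathrm{mod}\ z)$ does not single out one element, and one must fix a fundamental domain to restore uniqueness before a Mignotte-type scheme can work.
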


This theorem can be generalized, as the integers need not to be relatively pairwise coprime. In this case there exists a solution if and only if $a_{i} \equiv a_{j}  (\mbox{\rm mod } \mbox{ lcm}(m_{i}, m_{j})  ) $ for all $ 1 \leq i,j \leq n $. Furthermore, it can be stated over Euclidean rings, such as Gaussian Integers. 

\subsection{Mignotte SSS over $\mathbb{Z}$}

The Mignotte's original secret sharing scheme allows the construction of threshold $(t,n)$ schemes over a set $\mathcal{P}=\{ 1,\dots, n\}$ of $n$ participants as follows \cite{Mignotte}: Let $\mathbf{m}:m_1<m_2<\cdots<m_n$, be a sequence of $n$ pairwise relatively coprime integers such that $m_{n-t+2}\cdots m_n < m_1\cdots m_t$. The integer $m_i$ is assigned to participant $i$. Let us write $m^-=m_{n-t+2}\cdots m_n$, $ m^+=m_1\cdots m_t$ and for a coalition $C\subseteq \mathcal{P}$, let $m(C)=\prod_{i\in C} m_i$. The above condition on the $m_i$'s implies that $m(C)\le m^-$ when $|C|<t$ and $m(C)\ge m^+$ when $|C|\ge t$. 

The scheme operates as follows: Let $\mathcal{S}=\{ s\in \mathbb{Z} \ : \ m^-  < s < m^+\}$ be the set of secrets to be shared.
Given a secret $s$, the share of participant $i$ is $s_i= s \ (\mbox{\rm mod } m_i)$. An authorized coalition $A$ with $|A|\ge t$ may recover the secret by solving the system
$$
(S_A)  \hspace*{12mm} x\equiv s_i \; (\mbox{\rm mod } m_i), \hspace*{4mm} i\in A
$$
whose solution $x$ is unique modulo $m(A)$ as guaranteed by the Chinese remainder theorem. Since $s< m^+\le m(A)$, it holds that $x=s$.
An unauthorized coalition $B$ with $|B|< t$, may try to recover the secret by solving the system
$$
(S_B)  \hspace*{12mm} x\equiv s_i \; (\mbox{\rm mod } m_i), \hspace*{4mm} i\in B
$$
whose solution $x$ is unique modulo $m(B)$. But since $x<m(B)\le m^-\le s$, it holds that $x\neq s$ and so $B$ does not recover the legitimate secret.

\subsection{Extensions of Mignotte SSS}

In \cite{Iftene} Iftene suggested an extension of Mignotte SSS over the integers as follows. Let $\mathbf{m}:m_1,m_2,\cdots,m_n$, be a sequence of $n$ (not necessarily coprime)  integers. For a coalition $C\subseteq\mathcal{P}$ let us denote by $\mbox{\rm lcm}(C)$ the least common multiple of $\{ m_i  : i\in C\}$. Take two integers $m^-<m^+$ such that the interval $(m^-,m^+)$ does not contain the  $\mbox{\rm lcm}(C)$ of any coalition $C\subseteq\mathcal{P}$. Furthermore, we impose that $\frac{\pi (m^{+}- 4m^{-})}{4m^{-}} > 1$.
In this setting let us consider the family $\mathcal{A}=\mathcal{A}(\mathbf{m},m^+) = \{ A\subseteq \mathcal{P} : \mbox{lcm}(A)\ge m^+\}$.
This is a monotonous increasing family, so we may consider it as an access structure over $\mathcal{P}$. The set of secrets is $\mathcal{S}=\{ s\in \mathbb{Z} \ : \ m^-\le s < m^+\}$. Sharing and reconstruction of  secrets is carried out as in the original Mignotte scheme.



Another extension of Mignotte's scheme to the ring of polynomials  in one variable over a finite field, $\mathbb{F}_q[X]$, was given in \cite{GMTodas}. Following this idea, other extensions of the method to Euclidean rings have been made, such as \cite{OTS} or \cite{ImageSharing}. From now on we will focus on the extension to Gaussian Integers.

\subsection{The ring of Gaussian Integers}

A {\em Gaussian Integer} is a complex number $z=a+bi$, where both $a$ and $b$ are integers. The set of Gaussian Integers is thus $\mathbb{Z}[i]$. Over this ring we may consider the Euclidean function {\em norm} of $z$, $N(z)=a^2+b^2=z\overline{z}$ (where $\overline{z}$ denotes the complex conjugated of $z$). Thus the norm is multiplicative, $N(vz)=N(v)N(z)$ for all $v,z \in \mathbb{Z}[i]$ and satisfies $N(v)\le N(vz)$ if $z\neq 0$.

The most interesting arithmetic property of $\mathbb{Z}[i]$ is the existence of an Euclidean division with respect to the \textit{norm}: given $v,z\in\mathbb{Z}[i]$ with $z\neq 0$, there exist $q,r \in\mathbb{Z}[i]$ such that $v=zq+r$ with $N(r)<N(z)$. We say that $\mathbb{Z}[i]$ is an Euclidean Domain. Thus   $\mathbb{Z}[i]$  is also a Principal Ideal Domain \cite{Fra}, and therefore a Chinese remainder theorem similar to Theorem \ref{chino}, holds over it.

We observe, however, that the quotient and remainder of the Euclidean division are not
uniquely determined, unlike what we have over $\mathbb{F}_q[X]$ or what we can do over $\mathbb{Z}$, see \cite{Jod}. For example
we have
$10i = (5+4i)(1+i)+(i-1) = (5+4i)(1+2i)+(3-4i)$. Thus there is no a consistent way to define the expression $v\; (\mbox{mod}  \;z)$, as it does not uniquely identify any element. As a result, the solution 
of a system of congruence equations is not uniquely determined.


\subsection{The Mignotte  SSS  over $\mathbb{Z}[i]$ of \cite{OTS}}

In \cite{OTS} the authors propose an extension of Mignotte's scheme to a threshold $(t,n)$ SSS over $\mathbb{Z}[i]$, and they  use this extension to give a method for sharing secret images. However they did not take into account the lack of uniqueness of Euclidean division. This causes that the proposed SSS may lead to reconstruct the wrong secrets. 

The method goes as expected: In order to share a secret among a set $\mathcal{P}$ of $n$ participants, we begin from a sequence $\mathbf{m}: m_1,\dots,m_n$ of pairwise coprime Gaussian Integers such that $N(m_1)<\cdots<N(m_n)$ and
$N(m_{n-t+2}\cdots m_n)<N(m_1\cdots m_t)$. The set of secrets is $\mathcal{S}=\{ s\in \mathbb{Z}[i] : N(m_{n-t+2}\cdots m_n)\le  N(s) <N(m_1\cdots m_t) \}$. The sharing of a secret $s\in \mathcal{S}$ and its
subsequent reconstruction are  performed in the usual way: Given a secret $s$, the share of participant $i$ is $s_i= s \ (\mbox{\rm mod } m_i)$. A  coalition $A$ with $|A|\ge t$, may recover the secret by solving the system
$$
(S_A)  \hspace*{12mm} x\equiv s_i \; (\mbox{\rm mod } m_i), \hspace*{4mm} i\in A .
$$ 
We will now show an example of how it operates and how it may lead to mistakes.

\begin{example}
(Example 3.3 of \cite{OTS}).
Let $n=3$, $\mathbf{m}: 7+4i, -3-13i, 11+8i$ and take $t=2$. The set of secrets is then
$\mathcal{S}=\{ s\in \mathbb{Z}[i] : 185 \le N(s) <11570 \}$. Let $s=70-70i$. Note that $N(s)=9800$ so $s$ is a valid secret. The shares are $s_1=1+2i, s_2=4, s_3=3-i$. The authorized coalition $\{ 2,3\}$ wants
to recover the secret. To that end they solve the system
$$
\left\{ \begin{array}{lll}
x \equiv &  4  & (\mbox{\rm mod } -3-13i) \\
x \equiv &  3-i & (\mbox{\rm mod } 11+8i) \\
\end{array} \right.
$$
whose solution is  $x=70-70i$, but also $x=-1+97i$, and both are valid secrets. Most computer systems choose the solution of smaller norm. Since $N(-1+97i)=9410<N(70+70i) = 9800$, then the coalition $\{ 2,3\}$ recovers $s=-1+97i$, which is a wrong secret.
\end{example}

\section{A Mignotte SSS over Gaussian Integers}

We now aim to develop an extension of Mignotte SSS to Gaussian Integers, making sure that the remainders are uniquely determined. That is, to ensure that the expression $v \ (\mbox{\rm mod } z)$ refers to a unique element of $\mathbb{Z}[i]$. On top of that, we will not impose the condition that the modules $m_{1},\cdots, m_{n}$ are coprime, so that we find more general schemes.

\subsection{Fundamental domains}

In order to ensure the uniqueness of the remainder, we will operate in restrictions of $\mathbb{Z}[i]$. 
Let $z\in\mathbb{Z}[i]$, $z\neq 0$.We may define the {\em fundamental domain} of $z$ is the subset of $\mathbb{C}$
$$
\mathcal{F}(z)=\{ z(\alpha+\beta i)  \ : \ \alpha,\beta \in \mathbb{R},  \mbox{$-\frac12< \alpha, \beta\le \frac12 $}\}.
$$
Geometrically $\mathcal{F}(z)$ is a semi-open square in $\mathbb{C}\sim \mathbb{R}^2$, with vertices
$$
\mbox{$
z(\frac 12+\frac 12 i), \;
z(\frac 12-\frac 12 i), \;
z(-\frac 12-\frac 12 i), \,
z(-\frac 12+\frac 12 i).
$}
$$

\begin{center}

\includegraphics[width=5cm]{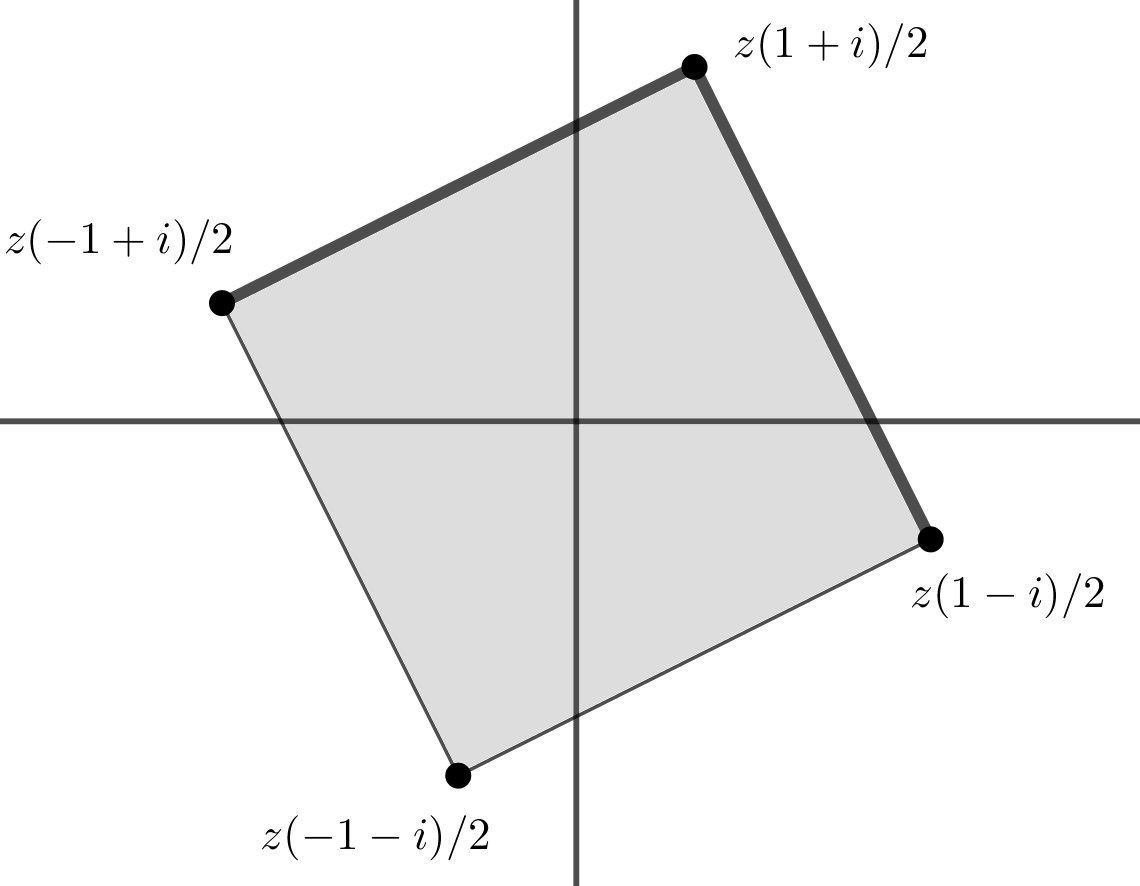}

\end{center}

As seen in the above figure $\mathcal{F}(z)$ is centered at 0 and has side length $|z|=\sqrt{N(z)}$ . Note that for every Gaussian Integer $v\in \mathcal{F}(z)$ it holds that $N(v) \le N(z)/2$. On this space, we find the following.

\begin{proposition}
\label{DivUnica}
Given two Gaussian Integers $v,z\in\mathbb{Z}[i]$ with $z\neq 0$, there exists  $q,r\in\mathbb{Z}[i]$ such that $r\in \mathcal{F}(z)$ and $v=zq+r$. Moreover such $q$ and $r$ are unique and  $N(r) \le N(z)/2$.
\end{proposition}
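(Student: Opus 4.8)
The plan is to reduce the statement to the classical geometry of the square lattice $z\mathbb{Z}[i]$ inside $\mathbb{C}$. First I would observe that $z\mathbb{Z}[i] = \{ zq : q \in \mathbb{Z}[i]\}$ is exactly the set of vertices of a tiling of $\mathbb{C}$ by translates of the closed-on-two-sides square $\mathcal{F}(z)$; indeed, the four corners of $\mathcal{F}(z)$ listed in the text are $z(\pm\frac12 \pm \frac12 i)$, and as $q$ runs over $\mathbb{Z}[i]$ the translates $zq + \mathcal{F}(z)$ cover $\mathbb{C}$ with pairwise disjoint interiors, and the semi-open convention ($-\frac12 < \alpha,\beta \le \frac12$) is precisely what makes the translates partition $\mathbb{C}$ exactly, with no point covered twice and none left out. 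This is the crux of both existence and uniqueness, so I would state it as the first step.

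For \emph{existence}: given $v \in \mathbb{Z}[i]$, compute $v/z \in \mathbb{C}$ (a well-defined complex number since $z \neq 0$) and write $v/z = \alpha_0 + \beta_0 i$ with $\alpha_0,\beta_0 \in \mathbb{Q}$. Choose integers $a,b$ so that $\alpha_0 - a \in (-\frac12,\frac12]$ and $\beta_0 - b \in (-\frac12,\frac12]$ — this is possible since every real number lies in exactly one interval of the form $(k - \frac12, k+\frac12]$ with $k \in \mathbb{Z}$. Set $q = a + bi \in \mathbb{Z}[i]$ and $r = v - zq$. Then $r/z = (\alpha_0 - a) + (\beta_0 - b)i$ has both coordinates in $(-\frac12,\frac12]$, so $r \in \mathcal{F}(z)$ by definition; and $r = v - zq \in \mathbb{Z}[i]$ since $v, zq \in \mathbb{Z}[i]$.

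For \emph{uniqueness}: suppose $v = zq + r = zq' + r'$ with $r, r' \in \mathcal{F}(z)$. Then $z(q - q') = r' - r$, so $r' - r \in z\mathbb{Z}[i]$, i.e. $(r'-r)/z = \alpha + \beta i$ with $\alpha, \beta \in \mathbb{Z}$. On the other hand $r/z$ and $r'/z$ each have real and imaginary parts in $(-\frac12,\frac12]$, so their difference has real and imaginary parts in $(-1,1)$. An integer in $(-1,1)$ is $0$, hence $\alpha = \beta = 0$, giving $r = r'$ and then $q = q'$ (as $z \neq 0$).

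For the \emph{norm bound}: any $r \in \mathcal{F}(z)$ is $z(\alpha + \beta i)$ with $|\alpha|, |\beta| \le \frac12$, so $N(r) = N(z)(\alpha^2 + \beta^2) \le N(z)(\frac14 + \frac14) = N(z)/2$; this is the same computation already noted in the text just before the proposition, so I would simply invoke it. The only mild subtlety — and the one place I would be careful — is keeping the half-open interval bookkeeping consistent (using $(-\frac12,\frac12]$ throughout rather than $[-\frac12,\frac12)$), since it is exactly this choice that simultaneously guarantees a representative exists and that it is unique; once that is pinned down the argument is routine.
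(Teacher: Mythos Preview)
Your proof is correct and follows essentially the same route as the paper's: both compute $v/z$, round each coordinate to the nearest integer using the half-open convention $(-\tfrac12,\tfrac12]$ to produce $q$, then argue uniqueness by noting that the difference of two representatives in $\mathcal{F}(z)$, divided by $z$, has integer coordinates lying in $(-1,1)$ and hence vanishes. The paper is slightly more explicit in giving the formula $a=\lceil x-\tfrac12\rceil$ for the rounding step, while you add the geometric tiling picture as motivation, but the substance is identical.
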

\begin{proof}
Let us consider the complex number $v/z=x+yi$. Let $a$, $b$  be the integers
$a=\lceil x -\frac{1}{2} \rceil$ and $b=\lceil y-\frac{1}{2} \rceil$. Thus $-\frac{1}{2}< x-a\le \frac{1}{2}$, $-\frac{1}{2}< y-b\le \frac{1}{2}$. Let $q=a+bi$. Then  $v=zq+(z(v/z-q))=zq+r$ where $r=z(v/z-q) = z((x-a)+(y-b)i ) \in \mathcal{F}(z)$. To see the uniqueness, if $zq_1+r_1=zq_2+r_2$ with $r_1,r_2\in \mathcal{F}(z)$, then
write $r_1=z(\alpha_1+\beta_1 i)$,  $r_2=z(\alpha_2+\beta_2 i)$ with $-\frac{1}{2}< \alpha_1, \beta_1, \alpha_2, \beta_2\le  \frac{1}{2}$. We have $z(q_2-q_1)=r_1-r_2=z((\alpha_1-\alpha_2)+(\beta_1-\beta_2) i)$, hence $\alpha_1-\alpha_2$ and $\beta_1-\beta_2$ are both integers. So $\alpha_1=\alpha_2$ and $\beta_1=\beta_2$.
The last statement, $N(r) \le N(z)/2$, is a consequence of the fact that $r\in \mathcal{F}(z)$.
\end{proof}

Given  $v,z\in\mathbb{Z}[i]$ with $z\neq 0$ as above, we say that the unique remainder $r\in \mathcal{F}(z)$ of the division $v=zq+r$, is the {\em principal value} of $v \ (\mbox{\rm mod } z)$. We will write this as
$r=v \ (\mbox{\bf mod } z)$. Let us note that the above proof gives a procedure to explicitly compute this number.

Besides $\mathcal{F}(z)$ we shall also consider the {\em strict fundamental domain} of $z$, which is the set $\mathcal{F}^o(z)$ of points in $\mathcal{F}(z)$ but  not on its border
$$
\mathcal{F}^{o}(z)=\{ z(\alpha+\beta i)  \ : \ \alpha,\beta \in \mathbb{R},  \mbox{$-\frac 12< \alpha, \beta< \frac12)$}\}
$$
and the {\em closure} of $\mathcal{F}(z)$
$$
\overline{\mathcal{F}}(z)=\{ z(\alpha+\beta i)  \ : \ \alpha,\beta \in \mathbb{R},  \mbox{$-\frac 12\le \alpha, \beta \le \frac12)$}\}.
$$
Clearly $\mathcal{F}^{o}(z)\subset \mathcal{F}(z)\subset\overline{\mathcal{F}}(z)$.
Now we will expose some properties of these domains, that will allow us to successfully develop the scheme. 
\begin{proposition}
\label{contenciones}
Let $v,z\in \mathbb{Z}[i]$, $z\neq 0$.\newline
(a) If $N(v)< N(z)/2$, then ${\mathcal{F}}(v)\subset {\mathcal{F}}(z)$.  \newline
(b) $\{ u\in \mathbb{Z}[i]  :  N(u)< N(z/2)\} \subset {\mathcal{F}}^o(z) \subset \{ u\in \mathbb{Z}[i]  :  N(u)\le N(z)/2\}$.
\end{proposition}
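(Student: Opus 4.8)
The strategy is to reduce both parts to two elementary facts about the semi-open square $\mathcal{F}(z)$, which I would record first. Writing $N$ also for the extension $N(w)=|w|^{2}=w\overline{w}$ to all of $\mathbb{C}$ (so that $N(z/2)=N(z)/4$ makes sense), I claim that (i) the open disk $\{w\in\mathbb{C}:N(w)<N(z)/4\}$ is contained in $\mathcal{F}^{o}(z)$, and (ii) $\overline{\mathcal{F}}(z)\subseteq\{w\in\mathbb{C}:N(w)\le N(z)/2\}$. Both follow by writing $w=z(\alpha+\beta i)$ with $\alpha,\beta\in\mathbb{R}$, so that $N(w)=N(z)(\alpha^{2}+\beta^{2})$: if $N(w)<N(z)/4$ then $|\alpha|\le\sqrt{\alpha^{2}+\beta^{2}}<\tfrac12$ and similarly $|\beta|<\tfrac12$, i.e. $w\in\mathcal{F}^{o}(z)$; and if $|\alpha|,|\beta|\le\tfrac12$ then $\alpha^{2}+\beta^{2}\le\tfrac12$, i.e. $N(w)\le N(z)/2$. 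Geometrically (i) says the inscribed disk of the square sits in its interior and (ii) says the square sits in its circumscribed disk; fact (ii) is also just the Remark preceding the statement, now for arbitrary points of the fundamental domain rather than only Gaussian integers.

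For part (a), take any $w\in\mathcal{F}(v)$. Since $\mathcal{F}(v)\subseteq\overline{\mathcal{F}}(v)$, fact (ii) gives $N(w)\le N(v)/2$, and the hypothesis $N(v)<N(z)/2$ then gives $N(w)<N(z)/4$; by fact (i) we get $w\in\mathcal{F}^{o}(z)\subset\mathcal{F}(z)$. Hence $\mathcal{F}(v)\subset\mathcal{F}(z)$ (indeed $\mathcal{F}(v)\subset\mathcal{F}^{o}(z)$). For part (b), the left inclusion is fact (i) restricted to Gaussian integers, using $N(z/2)=N(z)/4$. The right inclusion is fact (ii) applied to $\overline{\mathcal{F}}(z)\supset\mathcal{F}^{o}(z)$: any $u\in\mathcal{F}^{o}(z)$ satisfies $N(u)\le N(z)/2$.

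I do not anticipate a genuine obstacle; the only point requiring care is to keep the strictness of the norm inequalities aligned with the open/closed status of the squares and disks — it is precisely because the hypothesis $N(v)<N(z)/2$ is strict that $w$ lands in the open disk, and hence in $\mathcal{F}^{o}(z)$, rather than merely in $\overline{\mathcal{F}}(z)$. It is also worth noting at the outset that $N$ is being used here as the squared modulus on all of $\mathbb{C}$, which is what legitimizes writing $N(z/2)$ and applying the estimates to the non-integer points that make up the fundamental domains.
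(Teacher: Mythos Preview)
Your proof is correct and follows essentially the same route as the paper's. Both arguments rest on the two geometric facts you call (i) and (ii): the inscribed disk of radius $\sqrt{N(z)}/2$ lies in $\mathcal{F}^{o}(z)$, and $\overline{\mathcal{F}}(z)$ lies in the circumscribed disk of radius $\sqrt{N(z)/2}$; the paper phrases the same comparison as ``the highest norm in $\mathcal{F}(v)$, reached at $v(\tfrac12+\tfrac12 i)$, is less than the smallest norm on the border of $\mathcal{F}(z)$, reached at $z\cdot\tfrac12$,'' and then writes the identical chain $N(v)/2<N(z)/4$. Your presentation is a bit more explicit in writing $w=z(\alpha+\beta i)$ and computing with $\alpha^{2}+\beta^{2}$, but the underlying argument is the same.
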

\begin{proof}
(a)  Since both ${\mathcal{F}}(v)$ and ${\mathcal{F}}(z)$ are squares centered at 0, it suffices to see that the highest norm of an element in ${\mathcal {F}}(v)$ is less than the smallest norm of an element on the border of ${\mathcal{F}}(z)$. As these are reached at $v (\frac{1}{2} +\frac{1} {2} i)$ and $ z (\frac{1}{2})$ respectively, the inclusion is a consequence of the chain of inequalities
$$
\mbox{$
N\left(v(\frac{1}{2}+\frac{1}{2}i)\right)= \frac{1}{2} N(v)< \frac{1}{4}N(z)= N\left(z(\frac{1}{2})\right)$.}
$$
(b) The left hand inclusion holds because the element with the lowest norm on the border of $ {\mathcal {F}}(z)$ has norm $N(z)/4$.
The right hand one is due to the property of the normalized remainder $r$ in the Euclidean division that $N(r)\le N(z)/2$.
\end{proof}

The units (that is, the invertible elements) of $\mathbb{Z}[i]$ are precisely the elements with norm 1, that is $\pm1, \pm i$.  The {\em associates} of a Gaussian integer $z$ are the products $uz$, where $u$ is a unit. Thus, the associates of $z$ are $\pm z, \pm iz$. Clearly two associated numbers $z$ and $uz$ generate the same ideal in $\mathbb{Z}[i]$. So given $z_1,\dots,z_n\in\mathbb{Z}[i]$, the expressions $\mbox{\rm gcd}(z_1,\dots,z_n)$ and $\mbox{\rm lcm}(z_1,\dots,z_n)$ are defined up to associates.
Note that in general $\mathcal{F}(z)\neq\mathcal{F}(uz)$ and thus, for $v\in \mathbb{Z}[i]$, we may have  $v \ (\mbox{\bf mod } z)\neq v \ (\mbox{\bf mod } uz)$. For example, $z/2\in \mathcal{F}(z)$ but $z/2\notin \mathcal{F}(-z)$ and thus, if   $z/2\in\mathbb{Z}[i]$ then we have $z/2 \ (\mbox{\bf mod } z)=z/2,  z/2 \ (\mbox{\bf mod } -z)=-z/2$.

\begin{proposition}
\label{Fstar}
Let $u,v,z$ be three Gaussian Integers such that $u$ is a unit. The following properties hold.\newline
(a) $\mathcal{F}^o(z)=\mathcal{F}^o(uz)$ and $\overline{\mathcal{F}}(z)=\overline{\mathcal{F}}(uz)$. \newline
(b) If $v \ (\mbox{\bf mod } z) \in \mathcal{F}^o(z)$, then $v \ (\mbox{\bf mod } uz)=v \ (\mbox{\bf mod } z)$.
\end{proposition}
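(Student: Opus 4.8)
The plan is to reduce both parts to the elementary fact that multiplication by a unit $u\in\{1,-1,i,-i\}$ acts on $\mathbb{C}\cong\mathbb{R}^2$ as a rotation by a multiple of $90^\circ$ about the origin, and that such a rotation carries the open square $(-\tfrac12,\tfrac12)^2$, as well as the closed square $[-\tfrac12,\tfrac12]^2$, onto itself.

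For part (a) I would write
$$
\mathcal{F}^o(uz)=\{\, (uz)(\alpha+\beta i):-\tfrac12<\alpha,\beta<\tfrac12 \,\}=\{\, z\cdot\bigl(u(\alpha+\beta i)\bigr):-\tfrac12<\alpha,\beta<\tfrac12 \,\}
$$
and then check, case by case on $u$, that $\{\, u(\alpha+\beta i):-\tfrac12<\alpha,\beta<\tfrac12 \,\}=\{\, \alpha'+\beta' i:-\tfrac12<\alpha',\beta'<\tfrac12 \,\}$: for $u=1$ this is the identity $(\alpha,\beta)\mapsto(\alpha,\beta)$; for $u=-1$ it is $(\alpha,\beta)\mapsto(-\alpha,-\beta)$; for $u=i$ it is $(\alpha,\beta)\mapsto(-\beta,\alpha)$; and for $u=-i$ it is $(\alpha,\beta)\mapsto(\beta,-\alpha)$. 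Each of these four maps is a bijection of the open square onto itself, which gives $\mathcal{F}^o(uz)=\mathcal{F}^o(z)$; replacing every strict inequality by $\le$ yields $\overline{\mathcal{F}}(uz)=\overline{\mathcal{F}}(z)$ by the identical computation. I would also point out that this argument does \emph{not} survive for the half-open square $\mathcal{F}(z)$, whose defining condition $-\tfrac12<\alpha,\beta\le\tfrac12$ is not preserved by the coordinate swaps above (for instance, under $u=i$ the edge $\beta=\tfrac12$, which lies in $\mathcal{F}(z)$, is sent to the edge with first coordinate $-\tfrac12$, which does not); this is precisely why the hypothesis of part (b) involves $\mathcal{F}^o$ rather than $\mathcal{F}$, in agreement with the observation preceding the proposition.

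For part (b) I would put $r=v\ (\mbox{\bf mod } z)$, so that $v=zq+r$ for the unique $q\in\mathbb{Z}[i]$ with $r\in\mathcal{F}(z)$, and, by hypothesis, $r\in\mathcal{F}^o(z)$. Since $u$ is a unit, $u^{-1}$ is again a unit, hence $u^{-1}q\in\mathbb{Z}[i]$, and I can rewrite $v=(uz)(u^{-1}q)+r$. By part (a), $r\in\mathcal{F}^o(z)=\mathcal{F}^o(uz)\subset\mathcal{F}(uz)$, so this is a bona fide Euclidean division of $v$ by $uz$ with remainder in $\mathcal{F}(uz)$. The uniqueness clause of Proposition \ref{DivUnica}, applied to the modulus $uz$, then forces $v\ (\mbox{\bf mod } uz)=r=v\ (\mbox{\bf mod } z)$, as claimed.

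The only genuine obstacle is the modest bookkeeping in part (a): one has to confirm that all four unit cases really do preserve the open (respectively closed) square, and to record the contrasting failure for the half-open square $\mathcal{F}(z)$, since that failure is exactly what dictates the $\mathcal{F}^o$ hypothesis in part (b). Everything else is an immediate consequence of Proposition \ref{DivUnica}.
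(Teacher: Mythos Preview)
Your argument is correct and follows exactly the approach of the paper's own proof, which simply declares (a) ``straightforward from the definitions'' and derives (b) by rewriting $v=qz+r$ as $v=(q/u)(uz)+r$ and invoking (a). You have merely expanded the details---the case check on the four units and the explicit appeal to the uniqueness in Proposition~\ref{DivUnica}---and added the helpful remark on why $\mathcal{F}(z)$ itself is not unit-invariant.
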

\begin{proof}
The proof of (a) is straightforward from the definitions of $\mathcal{F}^o(z)$ and $\overline{\mathcal{F}}(z)$. (b)  If $v=qz+r$ then also  $v=(q/u)uz+r$, so the result  is a direct consequence of (a).
\end{proof}

As a consequence of this result we find that if $v \ (\mbox{\rm mod } \mbox{\rm lcm}(z_1,\dots,z_n))\in \mathcal{F}^o(\mbox{\rm lcm}(z_1,\dots,z_n))$ for some choice of $ \mbox{\rm lcm}(z_1,\dots,z_n)$ and $v \ (\mbox{\rm mod } \mbox{\rm lcm}(z_1,\dots,z_n))$, then the Gaussian Integer given by the expression $v \ (\mbox{\bf mod } \mbox{\rm lcm}(z_1,\dots,z_n))$ is uniquely determined, and moreover  $v \ (\mbox{\bf mod } \mbox{\rm lcm}(z_1,\dots,z_n)) \in \mathcal{F}^o(\mbox{\rm lcm}(z_1,\dots,z_n))$.

\begin{proposition}
\label{v1v2}
Let $v_1,v_2,z$ be Gaussian Integers such that $v_1,v_2\in \overline{\mathcal{F}}(z)$. If $ v_1\equiv v_2 \ (\mbox{\rm mod } z)$ and $v_1\in \mathcal{F}^o(z)$, then $v_1=v_2$.
\end{proposition}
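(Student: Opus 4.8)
The plan is to reduce the statement to the coordinate description of the fundamental domains, exactly as in the uniqueness part of Proposition~\ref{DivUnica}. First I would write $v_1=z(\alpha_1+\beta_1 i)$ and $v_2=z(\alpha_2+\beta_2 i)$ with $\alpha_j,\beta_j\in\mathbb{R}$. The hypothesis $v_1\in\mathcal{F}^o(z)$ gives the strict bounds $-\tfrac12<\alpha_1,\beta_1<\tfrac12$, while $v_2\in\overline{\mathcal{F}}(z)$ gives the closed bounds $-\tfrac12\le\alpha_2,\beta_2\le\tfrac12$.

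Next I would unwind the congruence $v_1\equiv v_2\ (\mathrm{mod}\ z)$: by definition it means that $z$ divides $v_1-v_2$ in $\mathbb{Z}[i]$, so $v_1-v_2=zk$ for some $k\in\mathbb{Z}[i]$. Dividing this equality by $z$ yields $k=(\alpha_1-\alpha_2)+(\beta_1-\beta_2)i$, and since $k$ is a Gaussian integer, both $\alpha_1-\alpha_2$ and $\beta_1-\beta_2$ must be ordinary integers.

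Finally I would combine the two facts: from the bounds above, $|\alpha_1-\alpha_2|\le|\alpha_1|+|\alpha_2|<\tfrac12+\tfrac12=1$, and likewise $|\beta_1-\beta_2|<1$. An integer of absolute value strictly less than $1$ is $0$, so $\alpha_1=\alpha_2$ and $\beta_1=\beta_2$, whence $v_1=v_2$.

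There is essentially no real obstacle here; the only point that requires care is the asymmetry of the hypotheses. It is crucial that at least one of $v_1,v_2$ lies in the \emph{open} domain, since two distinct points of $\overline{\mathcal{F}}(z)$ sitting on opposite edges (for instance $z/2$ and $-z/2$, when $z/2\in\mathbb{Z}[i]$) are congruent modulo $z$. This is precisely why one obtains the strict inequality $|\alpha_1-\alpha_2|<1$ rather than $|\alpha_1-\alpha_2|\le 1$, and that strictness is exactly what forces the equality.
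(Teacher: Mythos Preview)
Your argument is correct. Writing $v_j=z(\alpha_j+\beta_j i)$ and combining the strict bound on $(\alpha_1,\beta_1)$ with the closed bound on $(\alpha_2,\beta_2)$ gives $|\alpha_1-\alpha_2|<1$ and $|\beta_1-\beta_2|<1$, while the congruence forces these differences to be integers; this is airtight, and your remark on why the asymmetric hypothesis is essential is exactly the point.

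The paper's own proof is organized a bit differently: rather than redoing the coordinate computation, it observes that since $v_1\in\mathcal{F}^o(z)=\mathcal{F}^o(uz)$ for every unit $u$, and since $\overline{\mathcal{F}}(z)$ is covered by the four half-open squares $\mathcal{F}(uz)$, one can choose a unit $u$ with $v_1,v_2\in\mathcal{F}(uz)$ and then quote the uniqueness in Proposition~\ref{DivUnica} verbatim. Your route is more self-contained and avoids having to check that the four $\mathcal{F}(uz)$ really cover $\overline{\mathcal{F}}(z)$; the paper's route is shorter on the page but leans on that (unstated) covering fact. The underlying mechanism---integer coordinate differences squeezed to zero---is the same in both.
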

\begin{proof}
 If $ v_1\equiv v_2 \ (\mbox{\rm mod } z)$ with $v_1 \in \mathcal{F}^o(z),v_2\in \overline{\mathcal{F}}(z)$, then there exist a unit $u$ such that $v_1,v_2\in \mathcal{F}(uz)$. The result follows from the uniqueness of the remainder in a fundamental domain stated in Proposition \ref{DivUnica}.
\end{proof}

\subsection{A Mignotte SSS over $\mathbb{Z}[i]$}

Now that we have found a way to guarantee the uniqueness of a remainder, we may expose the desired extension of the scheme.
Let  ${\mathcal{P}}=\{1,\dots,n\}$ be a set of $n$ participants and let  $\mathbf{m}:m_1,m_2,\dots,m_n$,
be a sequence of $n$ (not necessarily pairwise coprime)  non-zero Gaussian Integers.
We assign the Gaussian Integer $m_i$ to each participant $i$. To abbreviate, given a coalition  $C\subseteq {\mathcal{P}}$ we write $\mbox{lcm}(C)=\mbox{lcm}\{ m_i \ : \ i\in C\}$. We say that $N(\mbox{lcm}(C))$ is the norm of  $C$.

Let $m^-,m^+$ be two integers such that $4 m^-<m^+$  and the norm of no coalition lies within the interval $(m^-,m^+)$ , that is, for all $C\subseteq {\mathcal{P}}$ we have either $N(\mbox{lcm}(C)) \le m^-$ or $N(\mbox{lcm}(C)) \ge m^+$. Furthermore, we impose that $\frac{\pi (m^+-4m^-) }{ 4 m^-} > 1$. We may now consider the access structure over $\mathcal{P}$
$$
{\mathcal{A}}=\{ A\subseteq {\mathcal{P}} \ : \ N(\mbox{lcm}(A)) \ge m^+ \}.
$$
Let us now develop a SSS $\mathcal{R}$ realizing the structure $\mathcal{A}$. The set of secrets to be shared is
$$
{\mathcal{S}}=\{ s\in Z[i] \ : \ m^-\le N(s) < \frac{m^+}{4}\}.
$$
The procedure goes as usual. Given a secret $s\in{\mathcal{S}}$,   the shares $s_i$ are $s_i= s \ (\mbox{\bf mod } m_i)$.  If an authorized coalition $A$  wants to recover the secret, they may solve the system of congruence equations
$$
(S_A) \hspace*{1cm} x\equiv s_i \ (\mbox{mod } m_i) \hspace*{1cm} i\in A
$$
whose solution is unique modulo $\mbox{lcm}(A)$. The secret is $x \ (\mbox{{\bf mod}  lcm}(A))$.

Please note that  for any coalition $C$ the corresponding system $(S_C)$ has a solution, since $s \ (\mbox{mod  lcm}(C))$ is so. Thus we only need the Chinese theorem to find the solution and to guarantee its uniqueness.

\begin{theorem}
The above method is correct and gives a secret sharing scheme whose access structure is $\mathcal{A}$.
\end{theorem}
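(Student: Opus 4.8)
The plan is to first dispose of two routine points and then split the claim into the two expected halves. The routine points: $\mathcal{A}$ is a legitimate (monotone increasing) access structure, since $A\subseteq A'$ forces $\mathrm{lcm}(A)\mid\mathrm{lcm}(A')$, hence $N(\mathrm{lcm}(A))\le N(\mathrm{lcm}(A'))$ by multiplicativity of the norm; and the scheme is well defined, because the shares $s_i=s\ (\mathbf{mod}\ m_i)$ are the uniquely determined principal values of Proposition \ref{DivUnica}, while for every coalition $C$ the system $(S_C)$ is solvable (the secret $s$ is a solution, so the compatibility conditions of the Chinese remainder theorem over $\mathbb{Z}[i]$ hold automatically), its set of solutions being a single residue class modulo $\mathrm{lcm}(C)$, so that ``$x\ (\mathbf{mod}\ \mathrm{lcm}(A))$'' denotes a single Gaussian integer. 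With this in hand I would prove: (1) every $A\in\mathcal{A}$ reconstructs $s$; (2) no $B\notin\mathcal{A}$ can determine $s$.

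For (1), fix $A\in\mathcal{A}$ and set $z=\mathrm{lcm}(A)$, so $N(z)\ge m^{+}$. Since $s\in\mathcal{S}$ we have $N(s)<m^{+}/4\le N(z)/4=N(z/2)$, and therefore, by Proposition \ref{contenciones}(b), the secret lies in the \emph{strict} fundamental domain: $s\in\mathcal{F}^{o}(z)$. This is exactly where the hypothesis $4m^{-}<m^{+}$ is used, and it is the ingredient missing from \cite{OTS}. On the other hand the reconstructed value $r=x\ (\mathbf{mod}\ z)$ lies in $\mathcal{F}(z)\subseteq\overline{\mathcal{F}}(z)$, and since both $s$ and the CRT-solution $x$ satisfy $(S_A)$, whose solution is unique modulo $z$, we get $r\equiv x\equiv s\ (\mathrm{mod}\ z)$. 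Proposition \ref{v1v2}, applied with $v_1=s\in\mathcal{F}^{o}(z)$ and $v_2=r\in\overline{\mathcal{F}}(z)$, then forces $r=s$. Hence $A$ recovers the legitimate secret.

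For (2), let $B\notin\mathcal{A}$, so $N(z_B)\le m^{-}$ with $z_B:=\mathrm{lcm}(B)$. The shares $\{s_i:i\in B\}$ determine precisely the residue class of $s$ modulo $z_B$, so the secrets consistent with $B$'s view are $\mathcal{S}\cap(s+z_B\mathbb{Z}[i])$. There are only $[\mathbb{Z}[i]:z_B\mathbb{Z}[i]]=N(z_B)\le m^{-}$ residue classes modulo $z_B$, so if $|\mathcal{S}|>m^{-}$ then by pigeonhole two distinct secrets share the same $B$-shares, i.e. $s\mapsto(s_i)_{i\in B}$ is not injective on $\mathcal{S}$ and $B$ cannot identify the secret. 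The condition $\tfrac{\pi(m^{+}-4m^{-})}{4m^{-}}>1$ is tailored exactly for this step: it says that the planar area $\pi(m^{+}/4-m^{-})$ of the annulus $\{w\in\mathbb{C}:m^{-}\le N(w)<m^{+}/4\}$ exceeds $m^{-}$, and I would pass from this area to the count $|\mathcal{S}|$ of Gaussian integers in the annulus via the standard Gauss-circle estimate. (As a sanity check paralleling the classical argument over $\mathbb{Z}$, the naive value $x\ (\mathbf{mod}\ z_B)$ has norm at most $N(z_B)/2\le m^{-}/2$, so it is not even an element of $\mathcal{S}$.)

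The main obstacle is the correctness half (1): ensuring the principal-value reconstruction returns $s$ \emph{exactly}, which is the entire point of the paper and rests on squeezing $s$ into $\mathcal{F}^{o}(\mathrm{lcm}(A))$ through $N(s)<N(\mathrm{lcm}(A))/4$ together with the uniqueness statements of Propositions \ref{DivUnica} and \ref{v1v2}. The only other non-formal point is the lattice-point/area comparison in (2), and it is precisely what the numeric hypothesis on $m^{-},m^{+}$ is designed to supply.
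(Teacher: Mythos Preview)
Your proof is correct and follows the same approach as the paper: for authorized $A$ you both use $N(s)<m^{+}/4\le N(\mathrm{lcm}(A))/4$ with Proposition~\ref{contenciones}(b) to place $s$ in $\mathcal{F}^{o}(\mathrm{lcm}(A))$ and then invoke Proposition~\ref{v1v2}; the paper's only extra step is an appeal to Proposition~\ref{Fstar} to make the conclusion independent of which associate of $\mathrm{lcm}(A)$ is chosen, a point you sidestep by fixing one representative $z$. For unauthorized $B$ the paper is actually briefer than you---it merely observes that $N(s)\ge m^{-}\ge N(\mathrm{lcm}(B))$ forces $s\notin\mathcal{F}(\mathrm{lcm}(B))$, so the reconstruction procedure cannot output $s$, and defers your pigeonhole/area discussion to Section~3.3.
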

\begin{proof}


An authorized coalition $A$ can solve the system and, from the solution $x$, obtain  $x \ (\mbox{{\bf mod} lcm}(A))\in {\mathcal{F}} (\mbox{lcm}(A))$. Since  $N(s)< m^+/4 \le N(\mbox{lcm }(A))/4$, Proposition \ref{contenciones} guarantees that $s\in \mathcal{F}^o(\mbox{lcm}(A))$, and by Proposition \ref{Fstar} this holds for any choice of $\mbox{lcm}(A)$. 
Then, from Proposition \ref{v1v2}, we have  $x \ (\mbox{{\bf mod}  lcm}(A))=s$, and $A$ successfully recovers the secret.
An unauthorized coalition $B$ may find the solution $x$ of $(S_B)$ modulo $\mbox{lcm}(B)$. However, since $N(s)\ge m^- \ge N(\mbox{lcm}(B))>N(x)$, it holds that $s\neq x$.  $B$ can  also compute
$x \ (\mbox{{\bf mod}  lcm}(B))\in {\mathcal{F}} (\mbox{lcm}(B))$. But since
$N(s)> m^-/2 > N(\mbox{lcm }(B))/2$, again according to Proposition \ref{contenciones},  we have $s\not\in {\mathcal{F}} (\mbox{lcm}(B))$. Hence $x \ (\mbox{{\bf mod}  lcm}(B))\neq s$. We focus on this on section 3.3.
\end{proof}

\begin{example}

We will now see how Example 1 would work under this new method. 
Let $n=3$, $\mathbf{m}: 7+4i, -3-13i, 11+8i$ and take $t=2$. The set of secrets will now be
$\mathcal{S}=\{ s\in \mathbb{Z}[i] : 185 \le N(s) <2892 \}$. Let $s=70-70i$. Note that $N(s)=9800$ so $s$ is not a valid secret, and thus cannot be shared. This explains why the method proposed in \cite{OTS} does not work on this case. 

\end{example}
\subsection{Some Properties}

Once the method is described we may study some of its properties.
Computing the information rate of this scheme leads to the so called {\em Gauss circle problem}, which asks about the number of Gaussian Integers inside a circle of radius $r>0$ centered at the origin. That is to say, the number of Gaussian Integers $z$ such that $N(z)\le N(r)=r^2$, \cite{circulo}. We denote this number by $\mathfrak{N}(r)$. Since, on average, each unit square contains
one Gaussian Integer, $\mathfrak{N}(r)$  is approximately equal to the area of a circle of radius $r$,
$$
\mathfrak{N}(r)\sim \pi r^2
$$

It is also known a explicit expression for this number:

$$
\mathfrak{N} (r) = 1 + 4 \sum_{j = 0}^{\infty} ( \lfloor  \frac{ r^2 }{4j + 1} \rfloor  - \lfloor \frac{r^2}{4j + 3}  \rfloor )
$$
Then the number of possible secrets to be shared is
$$
|{\mathcal{S}}|=\mathfrak{N}(\frac{\sqrt{m^+-1}}{2})-\mathfrak{N}(\sqrt{m^--1})  
$$
$$
= 4 \sum_{j = 0}^{\infty} ( \lfloor  \frac{ m^{+} - 1}{16j + 16} \rfloor  - \lfloor \frac{m^{+} - 1}{16j + 12}  \rfloor ) - \lbrace 4 \sum_{j = 0}^{\infty} ( \lfloor  \frac{ m^{-} - 1}{4j + 1} \rfloor  - \lfloor \frac{m^{-} - 1}{4j + 3}  \rfloor )\rbrace
$$
$$
\sim\pi \left(\frac{m^+-1}{4}-m^-+1 \right)\sim\pi \left(\frac{m^+}{4}-m^- \right).
$$
On the other hand, the set $\mathcal{S}_i$ of possible shares for participant $i$ is precisely the set of congruence classes of Gaussian Integers modulo $m_i$. It is well known that the number of such congruence classes is the norm $N(m_i)$, \cite{Fra}.

Unfortunately, the scheme is not perfect.  An unauthorized coalition $B$ can solve $(S_B)$ and thus it may compute $x=s \ (\mbox{mod lcm}(B))$. So it may deduce that the secret is of the form   $x+\lambda \mbox{ lcm}(B)$ for some $\lambda\in Z [i]$. Thus it can discard all secrets of $\mathcal{S}$ not satisfying this condition. Since
$N(\mbox{\rm lcm}(B))<m^-$, this fact reduces for $B$ the set of secrets to a set of cardinality equal to

$$
\left\lbrace 4 \sum_{j = 0}^{\infty} ( \lfloor  \frac{ m^{+} - 1}{16j + 16} \rfloor  - \lfloor \frac{m^{+} - 1}{16j + 12}  \rfloor ) - \lbrace 4 \sum_{j = 0}^{\infty} ( \lfloor  \frac{ m^{-} - 1}{4j + 1} \rfloor  - \lfloor \frac{m^{-} - 1}{4j + 3}  \rfloor )\rbrace \right\rbrace / N(lcm(B))
$$

$$
\sim \frac{\pi (m^+-4m^-)/4 }{ N(\mbox{\rm lcm}(B))} >
\frac{\pi (m^+-4m^-) }{ 4 m^-}.
$$
So this number should be large enough in order to guarantee the security of the scheme. Note that we have imposed that number to be larger than $2$, so that the coalition $B$ never compute the correct secret.

\begin{example}
\label{ej2}
Let $\mathbf{m}$ be the sequence of pairwise coprime Gaussian Integers $\mathbf{m}:
15+14i, 10-18i, 13+16i$. Take $m^-=425, m^+=178504$.  This choice leads to a $(2,3)$ threshold access structure.
The set of secrets is
$\mathcal{S}=\{ s\in \mathbb{Z}[i] : 425 \le  N(s) \le 44625 \}$. The number of possible secrets is then
$$
|{\mathcal{S}}|=\mathfrak{N}(\sqrt{44625})-\mathfrak{N}(\sqrt{424}) \sim   138858
$$
while the set of possible shares has cardinality at most $N(13+16i)=425$. Of course the scheme is not perfect. As noted before, an unauthorized coalition $B$ can reduce the whole set of secrets to a set of size at least (approximately)
$$
\frac{\pi \left((m^+/4)^2-(m^--1)^2 \right)}{m^- }\approx 327.
$$
\end{example}

\subsection{Any access structure can be  realized by a Mignotte SSS over $\mathbb{Z}[i]$}

An interesting question is to characterize those access structures that can be realized by a Mignotte construction.
The next proposition is analogous to Theorem 2.2 of \cite{GMTodas}. Even the proof is an adaptation of it, we will include it for the convenience of the reader.

\begin{theorem}
\label{todas}
For any access structure $\mathcal{A}$ over $n$ participants and any integer $S$, there is sequence $\mathbf{m}:m_1,\dots, m_n$ of Gaussian Integers such that the Mignotte SSS over $\mathbb{Z}[i]$ arising from $\mathbf{m}$ realizes the structure $\mathcal{A}$ with a set of secrets $\mathcal{S}$ of cardinality $|\mathcal{S}|\ge S$.
\end{theorem}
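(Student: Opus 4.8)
The plan is to adapt the Galibus--Matveev construction over $\mathbb{F}_q[X]$ (Theorem 2.2 of \cite{GMTodas}), replacing irreducible polynomials by Gaussian primes and keeping track of norms so that the inequalities $4m^-<m^+$ and $\pi(m^+-4m^-)/(4m^-)>1$ required in Section 3.2 hold. As usual we may assume $\emptyset\notin\mathcal A$; then the family $\mathcal B=\{B_1,\dots,B_k\}$ of maximal sets \emph{not} in $\mathcal A$ is non-empty, and since the unauthorized coalitions form a monotone decreasing family, a coalition $C$ fails to belong to $\mathcal A$ exactly when $C\subseteq B_j$ for some $j$.

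First I would pick $k$ pairwise non-associate Gaussian primes $p_1,\dots,p_k$, each of norm at least $9$ and otherwise of norm as large as desired; this is possible because there are infinitely many Gaussian primes and their norms are unbounded (one may, for instance, take the $p_j$ to be distinct rational primes congruent to $3$ modulo $4$, of norm $p_j^2\ge 9$). Put $M=p_1\cdots p_k$ and assign to participant $i$ the module $m_i=\prod_{j:\,i\notin B_j}p_j$, an empty product meaning $1$. Comparing exponents at the distinct primes $p_j$ shows that, up to associates, $\operatorname{lcm}(C)=\prod_{j:\,C\not\subseteq B_j}p_j$ for every coalition $C$. Hence, if $C\in\mathcal A$, then $C\not\subseteq B_j$ for all $j$, so $\operatorname{lcm}(C)=M$ and $N(\operatorname{lcm}(C))=N(M)$; and if $C\notin\mathcal A$, then $C\subseteq B_{j_0}$ for some $j_0$, so $p_{j_0}\nmid\operatorname{lcm}(C)$, hence $\operatorname{lcm}(C)\mid M/p_{j_0}$ and $N(\operatorname{lcm}(C))\le N(M)/N(p_{j_0})\le N(M)/N(p_{\min})$, where $p_{\min}$ denotes a prime of smallest norm among $p_1,\dots,p_k$. (Note the $m_i$ are in general not pairwise coprime, which is why the non-coprime version of the scheme was needed.)

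Next I would set $m^+=N(M)$ and $m^-=N(M)/N(p_{\min})$, both positive integers. By the previous paragraph no coalition has norm strictly in $(m^-,m^+)$, and the coalitions with norm $\ge m^+$ are precisely those in $\mathcal A$; moreover $N(p_{\min})\ge 9$ gives $4m^-=4N(M)/N(p_{\min})\le\tfrac49 N(M)<N(M)=m^+$ and $\pi(m^+-4m^-)/(4m^-)=\pi\,(N(p_{\min})-4)/4\ge 5\pi/4>1$. Therefore, by the construction of Section 3.2, the Mignotte SSS over $\mathbb{Z}[i]$ attached to $\mathbf m=(m_1,\dots,m_n)$, $m^-$ and $m^+$ is well-defined and realizes the structure $\mathcal A$.

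Finally, to get $|\mathcal S|\ge S$, observe that $m^+/4-m^-=N(M)\big(\tfrac14-\tfrac1{N(p_{\min})}\big)$ with $\tfrac14-\tfrac1{N(p_{\min})}\ge\tfrac14-\tfrac19=\tfrac5{36}>0$; combining this with $|\mathcal S|\sim\pi(m^+/4-m^-)$ --- or, rigorously, with the classical fact that $\mathfrak{N}(r)$ differs from $\pi r^2$ by at most a constant times $r$ --- gives $|\mathcal S|\ge\tfrac{5\pi}{72}N(M)-c\sqrt{N(M)}$ for an absolute constant $c$, and this exceeds $S$ once $N(M)=\prod_j N(p_j)$ is large enough, which is ensured by the freedom left in the choice of the $p_j$. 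I expect the only delicate point to be this constant-chasing: over $\mathbb{Z}[i]$ one must discard the small Gaussian primes $1+i$ and $2\pm i$ (of norms $2$ and $5$), since otherwise $\pi(m^+-4m^-)/(4m^-)$ need not exceed $1$; restricting to primes of norm $\ge 9$ fixes this and is essentially the sole departure from the polynomial-ring argument.
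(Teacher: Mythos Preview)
Your proof is correct and follows essentially the same construction as the paper's: both enumerate the maximal unauthorized coalitions $B_1,\dots,B_k$, attach to each a pairwise coprime Gaussian integer of norm $>8$ (you specialize to Gaussian primes), and set $m_i=\prod_{j:\,i\notin B_j}p_j$, then take $m^+=N(M)$. Your choice $m^-=N(M)/N(p_{\min})$ differs only cosmetically from the paper's $m^-=\max_j N(\operatorname{lcm}(B_j))$, and you verify the inequality $\pi(m^+-4m^-)/(4m^-)>1$ and the lower bound on $|\mathcal S|$ more explicitly than the paper does, but the underlying argument is identical.
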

\begin{proof}
Let $\mathcal{A}$ be an access structure over a set $\mathcal{P}$ of $n$ participants and let $B_1,\dots,B_t$ be the maximal unauthorized coalitions for $\mathcal{A}$. Choose  $t$ pairwise coprime Gaussian Integers $\mu^{(1)},\dots,\mu^{(t)}$ with $N(\mu^{(j)})>8$ for all $j=1,\dots,t$, and let $\mu=\mu^{(1)} \cdots\mu^{(t)}$. Now for $i=1,\dots,n$ and $j=1,\dots,t$, define
$$
{\mu}^{(j)}_i=
\left\{ \begin{array}{ll}
1         & \mbox{if $i\in B_j$} \\
{\mu}^{(j)}     & \mbox{if $i \not\in B_j$}
\end{array} \right.
$$
and  $m_i={\mu}^{(1)}_i\cdots {\mu}^{(t)}_i$. Then the Mignotte scheme associated to the sequence $\mathbf{m}:m_1,\dots, m_n$ realizes the structure $\mathcal{A}$. To see that let
$$
m^+=N({\mu}) \; \mbox{ and } \;
\ell=m^+/\min\{ N({\mu}^{(1)}),\dots, N({\mu}^{(t)})  \}.
$$
Note that $4\ell <m^+$. Let $A\in\mathcal{A}$ be an authorized coalition. For each maximal unauthorized coalition $B_j$ there is a participant $i$ (depending on $j$)  such that $i\in A\setminus B_j$. Thus ${\mu}^{(j)}_i={\mu}^{(j)}$ and hence ${\mu}^{(j)} | \mbox{lcm}(A)$. Since this reasoning holds for any $j=1,\dots,t$, we conclude that
$\mbox{lcm}(A)={\mu}$, so $N(\mbox{lcm}(A))=m^+$. Conversely, let $B$ be an unauthorized coalition. Then there exists $j$ such that $B\subseteq B_j$. It follows that ${\mu}^{(j)}_i=1$ for all $i\in B$, and hence $N(\mbox{lcm}(B))\le m^+/N({\mu}^{(j)})\le \ell$.
We have proved that ${\mathcal{A}}=\{ A\subseteq {\mathcal{P}} \ | \ N(\mbox{lcm}(A)) \ge m^+ \}$, so the sequence $\mathbf{m}:m_1,\dots, m_n$ realizes the structure $\mathcal{A}$ with set of secrets
$\mathcal{S}=\{ s\in Z[i] \ : \ m^-\le N(s) < \frac{m^+}{4}\}$, where $m^-=\max\{ N(\mbox{lcm}(B_1)),\dots,N(\mbox{lcm}(B_t))\}\le \ell$. Since  $\frac{m^+}{4}-m^-\ge \ell$, 
by choosing the Gaussian Integers $\mu^{(1)},\dots,\mu^{(t)}$ having large enough norm, it is clear that we can always get $|\mathcal{S}|\ge S$.
\end{proof}



The simplest case of Mignotte's construction arises when all the numbers in the sequence $\mathbf{m}$ are pairwise coprime. Let us recall that an access structure $\mathcal{A}$ is  {\em weighted threshold} if there is an $n$-tuple $\mathbf{w}=(w_1,\dots,w_n)$ of positive {\em weights} and a threshold $t$ such that $\mathcal{A}=\{ A\subseteq \mathcal{P} : \sum_{i\in A} w_i \ge t\}$. The proof that a weighted access structure defined by real weights can be also written by using integer weights is straightforward.

\begin{proposition}
If the Gaussian Integers in the sequence $\mathbf{m}:m_1,m_2,\cdots,m_n$ are pairwise coprime, then for any  $m^+$ the access structure  $\mathcal{A}(\mathbf{m},m^+)$   is a weighted threshold access structure.
\end{proposition}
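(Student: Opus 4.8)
The plan is to reduce the product condition defining $\mathcal{A}(\mathbf{m},m^+)$ to an additive one by taking logarithms. The first step is to observe that pairwise coprimality is exactly what makes the norm of a least common multiple multiplicative: since $\mathbb{Z}[i]$ is a UFD, for pairwise coprime $m_i$ one has $\mathrm{lcm}(A)\sim\prod_{i\in A}m_i$ for every coalition $A\subseteq\mathcal{P}$ (equality up to associates), and hence, $N$ being multiplicative, $N(\mathrm{lcm}(A))=\prod_{i\in A}N(m_i)$. This is the only point at which the coprimality hypothesis is used.

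Given this, for any coalition $A$ we have the chain of equivalences
$$
A\in\mathcal{A}(\mathbf{m},m^+)\iff\prod_{i\in A}N(m_i)\ge m^+\iff\sum_{i\in A}\log N(m_i)\ge\log m^+ .
$$
So, in the generic situation where no $m_i$ is a unit (equivalently $N(m_i)\ge 2$ for all $i$), setting the real weights $w_i=\log N(m_i)>0$ and the threshold $t=\log m^+$ exhibits $\mathcal{A}(\mathbf{m},m^+)$ as a weighted threshold access structure; the remark preceding the statement then converts these real weights into positive integer weights, finishing the proof.

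The point requiring some care — and the only real obstacle — is the presence of units among the $m_i$, which would contribute zero weight and so violate the positivity built into the definition. A unit $m_i$ is a dummy participant: it never changes $N(\mathrm{lcm}(\cdot))$ and so never affects membership in $\mathcal{A}$. To accommodate it, I would keep the threshold $t=\log m^+$, keep $w_i=\log N(m_i)$ for the non-unit indices, and assign to each of the $k$ unit indices a common weight $\delta>0$ chosen smaller than $g/k$, where $g>0$ is the least positive value of $\log m^+-\sum_{i\in A\setminus U}\log N(m_i)$ over all coalitions $A$ (here $U$ is the set of unit indices; take $g=+\infty$ if no such positive value occurs). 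A short check using the equivalence above shows that no coalition of total weight below $t$ becomes authorized while every coalition already in $\mathcal{A}$ stays authorized, so the structure is unchanged, and one again passes to integer weights via the remark. The only remaining degenerate cases, $m^+\le 1$ (where $\mathcal{A}$ is the full power set), are trivially weighted threshold with any positive weights and a nonpositive threshold, and may be dispatched in one line.
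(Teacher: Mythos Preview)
Your proof is correct and follows the same route as the paper: use pairwise coprimality to write $N(\mathrm{lcm}(A))=\prod_{i\in A}N(m_i)$, then take logarithms to turn the product threshold into an additive one with weights $\log N(m_i)$ and threshold $\log m^+$. Your treatment is in fact more careful than the paper's, which stops at that point and does not address the possibility that some $m_i$ is a unit (yielding weight zero) or that $m^+\le 1$; your handling of these edge cases is sound, if arguably more elaborate than the context requires.
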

\begin{proof}
A coalition $C$ is authorized if and only if $N(\mbox{lcm}(C))=\prod_{i\in C} N(m_i)\ge m^+$, that is if and only if $\sum_{i\in C} \log(N(m_i)) \ge\log(m^+)$, so $\mathcal{A}(\mathbf{m},m^+)$   is weighted threshold structure with weights $ \log(N(m_i))$, $j=1\dots,n$, and threshold  $\log(m^+)$.

\end{proof}

\begin{example}
Let $\mathbf{m}$ be the sequence of pairwise coprime Gaussian Integers $\mathbf{m}:
6+5i, 1-9i,13+16i$, and take $m^-=5002, m^+=25925$.  This sequence leads to the access structure $\mathcal{A}$ whose minimal authorized coalitions are $\{1,3\}$ and $\{2,3\}$, which 
certainly 
is a weighted threshold access structure with weights $1,1,2$ and threshold $t=3$.
\end{example}

\newpage

\end{document}